\definecolor{darkgreen}{rgb}{0,0.5,0}
\newcommand{\comment}[1]
\newcommand{\RNum}[1]{\uppercase\expandafter{\romannumeral #1\relax}}
\title{ Transtemporal edges and crosslayer edges in incompressible high-order networks\footnote{ These results are also contained in \cite{Abrahao2018earxiv}, available at \url{https://arxiv.org/abs/1812.01170}. } }
\author{Felipe S. Abrah\~{a}o\inst{1}, Klaus Wehmuth\inst{1}, Artur Ziviani\inst{1}}
\address{National Laboratory for Scientific Computing (LNCC)
	\\ 25651-075 – Petropolis, RJ – Brazil
\email{\{fsa,klaus,ziviani\}@lncc.br}
}
\newtheorem{theorem}{Theorem}[section]
\newtheorem{corollary}{Corollary}[section]
\theoremstyle{definition}
\theoremstyle{remark}
\begin{document}

\maketitle

\begin{abstract}
	This work presents some outcomes of a theoretical investigation of incompressible high-order networks defined by a generalized graph representation. 
	We study some of their network topological properties and how these may be related to real-world complex networks. 
	We show that these networks have very short diameter, high k-connectivity, degrees of the order of half of the network size within a strong-asymptotically dominated standard deviation, and rigidity with respect to automorphisms. 
	In addition, we demonstrate that incompressible dynamic (or dynamic multilayered) networks have transtemporal (or crosslayer) edges and, thus,
	a snapshot-like representation of dynamic networks is inaccurate for capturing the presence of such edges that compose underlying structures of some real-world networks.
\end{abstract}

%


\section{Introduction}\label{sectionIntro}

The general scope of this paper is to study (plain) algorithmically random high-order networks. 
In a general sense, a high-order network is any network that has additional representational structures.
For example, this is the case of dynamic  (i.e., time-varying) networks, multilayer networks, and dynamic multilayer networks \cite{Wehmuth2018b,Wehmuth2016b}. 
Thus, as the interest and pervasiveness of complex network modelling and network analysis increase, the importance of accurate representations of such networks into new extensions of graph-theoretical abstractions has become of increasing importance.

Within the theoretical framework of algorithmic information theory, complex net\-works theory, and graph theory, we study incompressibility (i.e., algorithmic randomness) and computably irreducible information content (i.e., plain or prefix algorithmic complexity) in generalized graph representations. 
In particular, we are grounding our formalizations, methods, and results on \cite{Buhrman1999,Zenil2018a}. 
Such an approach to network complexity, lossless compression, and random graphs has identified useful tools to find, estimate, or measure underlying topological structures or properties, e.g., degree distribution, k-connectivity, diameter, and symmetries \cite{Zenil2018a}.
Moreover, it is also related to more traditional approaches from statistical (entropy-like) information theory \cite{Zenil2018a}. 
For example, from the classical noiseless coding theorem \cite{Zenil2018a}, we know that, for large enough $ n  = \left| \mathrm{V}(G) \right|$, every recursively labeled random graph $ G $ on $n$ vertices and edge probability $ p = 1/2 $ in the classical model $ \mathcal{G}(n,p) $ is expected to be incompressible (i.e., algorithmically random).  
However, only algorithmic information theory gives us tools for studying incompressibility of fixed individual graphs that are not generated or defined by stochastic processes \cite{Zenil2018a}.
In this way, this work derives from previous works on topological properties of incompressible graphs obtained from an algorithmic complexity (or algorithmic randomness) analysis of recursively labeled graphs.
Thus, these methods differ from the traditional methods in random graphs theory, such as the probabilistic method. 

In the present work, we apply the results on labeling and algorithmic randomness introduced in \cite{Abrahao2018d}, which extends those in \cite{Buhrman1999} to MultiAspect Graphs (MAGs) \cite{Wehmuth2016b}. 
MAGs are formal representations of dyadic (or $2$-place) relations between two arbitrary $n$-ary tuples and have shown fruitful representational properties to network modelling and analysis of high-order networks \cite{Wehmuth2016b,Wehmuth2018b}.
More formally, $ \mathscr{G}=(\mathscr{A},\mathscr{E}) $ denotes a MAG, where $\mathscr{E}$ is the set of existing composite edges (which are ordered $2p$-tuples) between two arbitrary composite vertices (which are ordered $p$-tuples) of $ \mathscr{G} $. 
Each aspect $ \mathbf{ \sigma } \in \mathscr{A} $ is a finite set,
and the number of aspects $ p = | \mathscr{A} | $ is called the \emph{order} of $ \mathscr{G} $. 
Thus, in the present work, we are assuming dynamic networks, multilayered networks, or dynamic multilayered networks as special cases of MAGs \cite{Wehmuth2016b}. 
To tackle these problems in the present paper, we apply a theoretical approach by presenting definitions and theorems. 


\section{From some topological properties of incompressible high-order networks to transtemporal and crosslayer edges}\label{sectionTopologicalTVG}


As defined in \cite{Wehmuth2016b}, a simple time-varying graph (TVG) is an undirected TVG without self-loops in the same way that a simple MAG (or graph) is an undirected MAG (or graph) without self-loops.
From \cite{Buhrman1999}, the randomness deficiency measures how the recursively labeled MAG (or graph)\footnote{ Once a graph is a first-order MAG. } diverges from being plain algorithmically random given $ \left| \mathbb{V}( \mathscr{G}_c ) \right|  $ such that 
\[ C\left( \mathscr{E}( \mathscr{G}_c ) \, \mid \left| \mathbb{V}( \mathscr{G}_c ) \right| \right) 
\geq 
\binom{ \left| \mathbb{V}( \mathscr{G}_c ) \right| }{ 2 } - \delta( \left| \mathbb{V}( \mathscr{G}_c ) \right| )  \]
holds and, in this case, we say $ \mathscr{G}_c  $ is $ \delta( \left| \mathbb{V}( \mathscr{G}_c ) \right| ) $-C-random.
Note that: 
$ C(  y | x) $ denotes the plain algorithmic complexity of $ y $ given $ x $;
$ \mathscr{G}_c   = \left( \mathscr{A} , \mathscr{E} \right)$ denotes an arbitrary simple MAG; 
$ \mathbb{V}( \mathscr{G}_c )  \coloneqq \bigtimes_{i=1}^{ \left| \mathscr{A} \right| } \mathscr{A}( \mathscr{G} )[i] $ is the set of all possible composite vertices of $ \mathscr{G}_c $; 
$ \mathscr{A} $ is a class (or list) of sets $ \mathscr{A}( \mathscr{G} )[i] $ such that each $i$, where $ 1 \leq i \leq \left| \mathscr{A} \right|  $, is an aspect;
and $ \delta( \left| \mathbb{V}( \mathscr{G}_c ) \right| )  $ is the randomness deficiency of $ \mathscr{G}_c $.
By convention, we can assume $ \mathrm{ V }( \mathscr{G} ) = \mathscr{A}( \mathscr{G} )[1]  $ as the set of vertices of the MAG, $ \mathrm{ T }( \mathscr{G} ) = \mathscr{A}( \mathscr{G} )[2] $ as the set of time instants, and further sets $ \mathscr{A}( \mathscr{G} )[i] , i  \geq 3 $ as a set of layers (of type $i$).

Since a TVG $ \mathrm{ G_t }=(\mathrm{V},\mathscr{E},\mathrm{T}) $ is a second order MAG \cite{Wehmuth2016b}, where $\mathrm{V}$ is the set of vertices, $\mathrm{T}$ is the set of time instants, and $\mathscr{E} \subseteq \mathrm{V} \times \mathrm{T} \times \mathrm{V} \times \mathrm{T}$ is the set of (composite) edges, it is immediate to show in Corollary~\ref{corC-randomTVGtopologicalproperties} that the previously studied case for simple MAGs with arbitrary randomness deficiency $ \delta( \left| \mathbb{V}( \mathscr{G}_c ) \right|  ) $  in \cite{Abrahao2018d} also applies to simple TVGs.
To this end, note that in \cite{Abrahao2018d}, since the order of the simple MAG is arbitrary, there is a recursively labeled infinite family of simple TVGs that satisfy this with a chosen randomness deficiency  $ \delta( \left| \mathbb{V}( \mathrm{ G_t } ) \right|  ) = \mathbf{O}( \log_2( \left| \mathbb{V}( \mathrm{ G_t } ) \right| ) ) $.
Thus, from Theorem 5.1 and Corollary 5.1.1 in \cite{Abrahao2018d} we have that: 

\begin{corollary}\label{corC-randomTVGtopologicalproperties}
	Let $ F_{ \mathrm{ G_t } }  $ be a recursively labeled infinite family $  F_{ \mathrm{ G_t } } \neq \emptyset  $ of simple TVGs $ \mathrm{ G_t } $ that are $ \mathbf{O}( \log_2( \left| \mathbb{V}( \mathrm{ G_t } ) \right| ) ) $-C-random.
	Then, the following hold for large enough $ \mathrm{ G_t } \in F_{ \mathrm{ G_t } } $, where $ \mathbb{V}( \mathrm{ G_t } ) = \mathrm{V}( \mathrm{ G_t } ) \times \mathrm{T}( \mathrm{ G_t } )  $:
	\begin{enumerate}
		\item The degree $ \mathbf{d}( \mathbf{v} ) $ of a composite vertex $ \mathbf{v} \in \mathbb{V}( \mathrm{ G_t } ) $ in a MAG $  \mathrm{ G_t } \in F_{ \mathrm{ G_t } } $ satisfies
		\[
		\left| \mathbf{d}( \mathbf{v} ) - \left( \frac{ \left| \mathbb{V}( \mathrm{ G_t } ) \right| - 1 }{ 2 } \right) \right| 
		= 
		\mathbf{O}\left( \sqrt{ \left| \mathbb{V}( \mathrm{ G_t } ) \right| \, \left(  \mathbf{O}( \log_2( \left| \mathbb{V}( \mathrm{ G_t } ) \right|) ) \right) } \right) \text{ .}
		\]
		\label{corK-randomMAGsproperties1}
		
		\item $  \mathrm{ G_t }  $ has $ \frac{\left| \mathbb{V}( \mathrm{ G_t } ) \right| }{4} + \mathbf{o}(\left| \mathbb{V}( \mathrm{ G_t } ) \right| ) $ disjoint paths of length 2 between each pair of composite vertices $ \mathbf{u} , \mathbf{v} \in \mathbb{V}( \mathrm{ G_t } ) $. 
		\label{corK-randomMAGsproperties2}
		
		\item $  \mathrm{ G_t } $ has (composite) diameter $2$.
		\label{corK-randomMAGsproperties3}

		\item $  \mathrm{ G_t } $ is rigid under permutations of composite vertices. 
		\label{corK-randomMAGsproperties5} 
	\end{enumerate}
	
\end{corollary}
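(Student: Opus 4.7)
The plan is to derive the corollary as a direct specialization of Theorem~5.1 and Corollary~5.1.1 of \cite{Abrahao2018d} to the order-two case, using the identification of a simple TVG with a simple MAG of order two stated in the paragraph preceding the corollary. First I would spell out the translation of notation: a simple TVG $\mathrm{G_t}=(\mathrm{V},\mathscr{E},\mathrm{T})$ corresponds to a simple MAG $\mathscr{G}_c$ with $\mathscr{A}(\mathscr{G})[1]=\mathrm{V}$, $\mathscr{A}(\mathscr{G})[2]=\mathrm{T}$, composite vertex set $\mathbb{V}(\mathrm{G_t})=\mathrm{V}\times\mathrm{T}$, and composite edge set $\mathscr{E}\subseteq\mathbb{V}(\mathrm{G_t})\times\mathbb{V}(\mathrm{G_t})$. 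Under this identification, the $\binom{|\mathbb{V}(\mathrm{G_t})|}{2}$ potential composite edges and the randomness-deficiency condition $C(\mathscr{E}\mid |\mathbb{V}|)\geq\binom{|\mathbb{V}|}{2}-\delta(|\mathbb{V}|)$ carry exactly the same combinatorial content for TVGs as for general simple MAGs.

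Next, since \cite{Abrahao2018d} establishes the existence of a recursively labeled, non-empty infinite family of simple MAGs of arbitrary order whose randomness deficiency can be chosen as $\delta(|\mathbb{V}|)=\mathbf{O}(\log_2(|\mathbb{V}|))$, I would restrict the order parameter to $p=2$ to obtain the family $F_{\mathrm{G_t}}\neq\emptyset$ asserted in the statement. Recursive labelability is preserved under this restriction because second-order MAGs form a uniformly computable subclass and the labeling scheme used in \cite{Abrahao2018d} depends only on $|\mathbb{V}|$, not on the order $p$ itself.

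With $F_{\mathrm{G_t}}$ available, each of the four claims then follows by plugging $\delta=\mathbf{O}(\log_2|\mathbb{V}|)$ into the corresponding bounds of Theorem~5.1 and Corollary~5.1.1 of \cite{Abrahao2018d}: the degree-concentration estimate yields item~\ref{corK-randomMAGsproperties1}; the lower bound on internally disjoint length-two composite paths yields item~\ref{corK-randomMAGsproperties2}; composite diameter~$2$, item~\ref{corK-randomMAGsproperties3}, is an immediate consequence of item~\ref{corK-randomMAGsproperties2} together with the fact that for $\delta=\mathbf{O}(\log_2|\mathbb{V}|)$ the composite edge set cannot be complete, so the diameter cannot collapse to~$1$; and composite-vertex rigidity, item~\ref{corK-randomMAGsproperties5}, is the order-two instance of the automorphism-rigidity clause proved in the general MAG setting.

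The main obstacle is essentially bookkeeping rather than mathematics: one must verify that the error terms $\mathbf{O}(\sqrt{|\mathbb{V}|\log|\mathbb{V}|})$ and $\mathbf{o}(|\mathbb{V}|)$ arise correctly from the substitution $\delta=\mathbf{O}(\log_2|\mathbb{V}|)$ in the generic expressions of \cite{Abrahao2018d}, and that the counting of composite vertices matches $|\mathrm{V}|\cdot|\mathrm{T}|$ throughout, so that quantities such as ``large enough $\mathrm{G_t}$'' in the TVG setting coincide with ``large enough $|\mathbb{V}(\mathscr{G}_c)|$'' in the MAG setting. No new combinatorial argument is introduced beyond the order-two instantiation of the cited theorem.
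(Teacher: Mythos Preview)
Your proposal is correct and matches the paper's approach: the paper does not give a separate proof for this corollary but simply observes (in the paragraph immediately preceding it) that a simple TVG is a second-order MAG and then invokes Theorem~5.1 and Corollary~5.1.1 of \cite{Abrahao2018d} with the choice $\delta=\mathbf{O}(\log_2|\mathbb{V}|)$. Your write-up is a more explicit version of exactly that specialization, with no substantive difference in strategy.
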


In fact, this Corollary~\ref{corC-randomTVGtopologicalproperties} can be rewritten for arbitrary $ \mathbf{O}( \log_2( \left| \mathbb{V}( \mathscr{G}_c ) \right| ) ) $-C-random simple MAGs $ \mathscr{G}_c  $ instead of simple $ \mathbf{O}( \log_2( \left| \mathbb{V}( \mathrm{ G_t } ) \right| ) ) $-C-random TVGs $ \mathrm{ G_t } $ and, therefore, it also holds for other high-order networks, e.g.,
dynamic multilayer networks.

Now, let a \emph{transtemporal} edge be a composite edge $ e=( u , t_i , v , t_j ) \in \mathscr{E}( \mathrm{ G_t } ) $ with $ j \neq i \pm 1 $ and  $ j \neq i $. Thus, in Theorem~\ref{thmTranstemporaledges}, the short (composite) diameter and high $k$-connectivity (as defined in \cite{Buhrman1999}) of a $ \mathbf{O}( \log_2( \left| \mathbb{V}( \mathrm{ G_t } ) \right| ) ) $-C-random simple TVG ensures the existence of transtemporal edges in $ \mathrm{ G_t } $: 

\begin{theorem}\label{thmTranstemporaledges}
	Let $ \mathrm{ G_t } $ be a simple TVG satisfying Corollary~\ref{corC-randomTVGtopologicalproperties} with $ \left| \mathrm{ T }( \mathrm{ G_t }  ) \right| > 8  $.
	Then, for every pair of vertices $ u , v \in \mathrm{ V }( \mathrm{ G_t }  ) $ and time instants $ t_i , t_j \in \mathrm{ T }( \mathrm{ G_t }  ) $ with $ j > i+2 $, there is a transtemporal edge $ e \in \mathscr{ E }( \mathrm{ G_t } ) $. 
\end{theorem}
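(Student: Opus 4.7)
The plan is to read off the required transtemporal edge from a composite path of length at most~$2$ joining two chosen composite vertices, which item~\ref{corK-randomMAGsproperties3} of Corollary~\ref{corC-randomTVGtopologicalproperties} supplies. Given $u,v \in \mathrm{V}(\mathrm{G_t})$ and $t_i,t_j \in \mathrm{T}(\mathrm{G_t})$ with $j>i+2$, I would set $\mathbf{u}=(u,t_i)$ and $\mathbf{v}=(v,t_j)$. Since $i \neq j$, these are distinct composite vertices in $\mathbb{V}(\mathrm{G_t})$, so the composite diameter-$2$ property yields either a direct composite edge between them or a composite path of length exactly~$2$.

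In the first subcase, that edge is $(u,t_i,v,t_j)\in\mathscr{E}(\mathrm{G_t})$; since $|j-i|\geq 3$, the definition of transtemporal edge applies directly and this edge is itself the claimed $e$. In the second subcase, the length-$2$ path passes through an intermediate composite vertex $(w,t_k)\in\mathbb{V}(\mathrm{G_t})$, giving edges $(u,t_i,w,t_k)$ and $(w,t_k,v,t_j)$ in $\mathscr{E}(\mathrm{G_t})$. Were \emph{neither} of them transtemporal, the definition would force $|k-i|\leq 1$ and $|j-k|\leq 1$, whence by the triangle inequality $|j-i|\leq 2$, contradicting $j>i+2$. Thus at least one of the two composite edges along the path is transtemporal, and I would take this to be the edge $e$ asserted by the theorem.

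The hypothesis $|\mathrm{T}(\mathrm{G_t})|>8$ is used only to guarantee both that time-indices with $j>i+2$ actually exist and that $\mathrm{G_t}$ is within the \emph{large enough} regime under which Corollary~\ref{corC-randomTVGtopologicalproperties} is stated; I do not foresee extracting any finer quantitative information from it. The one conceptual point that deserves care, and which I expect to be the main (admittedly modest) obstacle, is verifying that in the length-$2$ case the intermediate composite vertex $(w,t_k)$ genuinely has its time-stamp drawn from $\mathrm{T}(\mathrm{G_t})$, so that the triangle-inequality step $|j-i|\leq|j-k|+|k-i|$ applies to legitimate integer indices and the pigeonhole dichotomy on $|k-i|$ versus $|k-j|$ is sound. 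Beyond that bookkeeping, the argument reduces to a one-line consequence of diameter~$2$ combined with the numerical gap $j>i+2$ granted by hypothesis.
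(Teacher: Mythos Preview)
Your argument is correct and in fact more economical than the paper's. Both proofs open with the same dichotomy coming from composite diameter~$2$ (item~\ref{corK-randomMAGsproperties3}): either $(u,t_i,v,t_j)$ is itself an edge, or there is a length-$2$ path through some intermediate composite vertex. Where you close the second case with the one-line triangle-inequality observation $|j-i|\leq|j-k|+|k-i|$, the paper instead invokes item~\ref{corK-randomMAGsproperties2} to produce $\tfrac{|\mathbb{V}(\mathrm{G_t})|}{4}+\mathbf{o}(|\mathbb{V}(\mathrm{G_t})|)$ disjoint length-$2$ paths and then uses the hypothesis $|\mathrm{T}(\mathrm{G_t})|>8$ in a counting step: since at most $2\,|\mathrm{V}(\mathrm{G_t})|<\tfrac{|\mathbb{V}(\mathrm{G_t})|}{4}$ intermediate composite vertices can have time-stamp $t_z\in\{t_i,t_j\}$, some intermediate $(h,t_z)$ must satisfy $z\notin\{i,j\}$, and from that the paper reads off a transtemporal edge. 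Your route shows that neither the $k$-connectivity item nor the numerical threshold $|\mathrm{T}|>8$ is actually needed for the bare existence statement---diameter~$2$ together with $j>i+2$ already forces one of the two edges on \emph{any} length-$2$ path to be transtemporal, regardless of which time slice the midpoint occupies (even $z=i$ or $z=j$ would do). What the paper's heavier argument buys is a quantitative byproduct (many midpoints outside the slices $t_i,t_j$, not just one transtemporal edge) and it is the template that motivates the aspect-size condition $>8$ appearing in the generalized Theorem~\ref{lemmaTransaspectedges}, even though your triangle-inequality shortcut would carry over verbatim there as well.
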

\begin{proof}[Proof]
	The case in which $ ( u , t_i , v , t_j ) \in \mathscr{ E }( \mathrm{ G_t } )  $ immediately satisfies the definition of transtemporal edge. From Corollary~\ref{corC-randomTVGtopologicalproperties}, the composite diameter is $2$. Therefore, it only remains to investigate the case in which there are $ h \in \mathrm{ V }( \mathrm{ G_t }  ) $ and $ t_z \in \mathrm{ T }( \mathrm{ G_t }  ) $ such that $ ( u , t_i , h , t_z ) \in \mathscr{ E }( \mathrm{ G_t } )  $ and $ ( h , t_z , v , t_j ) \in \mathscr{ E }( \mathrm{ G_t } )  $.
	From Corollary~\ref{corC-randomTVGtopologicalproperties}, we have that, for every pair of vertices $ u , v \in \mathrm{ V }( \mathrm{ G_t }  ) $ and time instants $ t_i , t_j \in \mathrm{ T }( \mathrm{ G_t }  ) $, there are $ \frac{\left| \mathbb{V}( \mathrm{ G_t } ) \right| }{4} + \mathbf{o}(\left| \mathbb{V}( \mathrm{ G_t } ) \right| ) $ disjoint paths of length 2 between $ ( u , t_i ) $ and $ ( v , t_j ) $.
	But, since $ \left| \mathrm{ T }( \mathrm{ G_t }  ) \right| > 8  $, the number of possible distinct composite vertices $ ( h , t_z ) $ with $ t_z = t_i $ or $ t_z = t_j $ will be always smaller than $ \frac{\left| \mathbb{V}( \mathrm{ G_t } ) \right| }{4} $ and, thus, strictly smaller than the number of distinct composite vertices connecting $ ( u , t_i ) $ and $ ( v , t_j ) $. 
	Therefore, there will be at least one composite vertex $ ( h , t_z ) $ with $ i + 1 < z $, $ \, z + 1 < j $, $ \, z < i $, or $ j < z $. Then, in any case, $ ( u , t_i , h , t_z ) \in \mathscr{ E }( \mathrm{ G_t } )  $ or $ ( h , t_z , v , t_j ) \in \mathscr{ E }( \mathrm{ G_t } )  $ will be a transtemporal edge.
\end{proof}

In fact, Theorem~\ref{thmTranstemporaledges} can easily be generalized to multilayer (undirected) networks or dynamic multilayered (undirected) networks, so that Theorem~\ref{thmTranstemporaledges} will become a corollary.
To this end, it suffices to extend Corollary~\ref{corC-randomTVGtopologicalproperties} to simple MAGs with order $ p \geq 2 $. 
First, the multilayer case in which there is just one additional aspect, besides the set of vertices, is totally analogous. 
Secondly, for the dynamic multilayer (or many-type multilayer) case in which the simple MAGs have order $ p > 2 $, we will have that the first aspect still is the set of vertices, the second aspect still is the set $ \mathrm{ T }( \mathscr{ G }_c ) =  \mathscr{A}( \mathscr{ G }_c  )[2]  $ of time instants (or the first layer type $ \mathrm{ L_2 }( \mathscr{ G }_c ) = \mathscr{A}( \mathscr{ G }_c  )[2] $), and the further aspects are any other layer type $ \mathrm{ L_k }( \mathscr{ G }_c ) = \mathscr{A}( \mathscr{ G }_c  )[k] $, where $ k > 2 $.  
Analogously to the temporal case, let a \emph{crosslayer} edge be a composite edge $ e=( u ,  \dots  , x_{hi} , \dots ,  x_{ps} , v , \dots , x_{hj} \dots , x_{ps'}) \in \mathscr{E}( \mathscr{ G }_c  ) $ in which $ j \neq i \pm 1 $ and  $ j \neq i $, where $ 2 \leq h \leq p $ and $ x_{hi} , x_{hj} \in \mathscr{A}( \mathscr{ G }_c  )[h] $. Thus, Theorem~\ref{thmTranstemporaledges} can be rewritten as: 

\begin{theorem}\label{lemmaTransaspectedges}
	Let $ \mathscr{ G }_c $ be a $ \mathbf{O}( \log_2( \left| \mathbb{V}( \mathscr{ G }_c ) \right| ) ) $-C-random simple MAG with order $ p \geq 2 $ that belongs to a recursively labeled infinite family $  F_{ \mathscr{ G }_c } $ of simple MAGs $ \mathscr{ G }_c $ such that 
	\[
	\left| \mathscr{A}( \mathscr{ G }_c  )[k]  \right|
	=
	\frac{\left| \mathbb{ V }( \mathscr{ G }_c  ) \right| }{ \left| \mathrm{ V }( \mathscr{ G }_c ) \right| \, 
		\bigtimes\limits_{ h \geq 2 , \, h \neq k } \left| \mathscr{A}( \mathscr{ G }_c  )[h]  \right|
	} 
	> 
	8
	\]
	Then, for every pair of composite vertices $ ( u ,  \dots  , x_{ki} , \dots ,  x_{ps}  ) $ and $ ( v , \dots , x_{kj} \dots , x_{ps'} ) $ with $ j > i+2 $, where $ 2 \leq k \leq p $ and $ x_{ki} , x_{kj}  \in \mathscr{A}( \mathscr{ G }_c  )[k] $, there is a crosslayer edge $ e \in \mathscr{ E }( \mathscr{ G }_c ) $, if $ \mathrm{ L_k }( \mathscr{ G }_c )  = \mathscr{A}( \mathscr{ G }_c  )[k] $, or a transtemporal edge $ e \in \mathscr{ E }( \mathscr{ G }_c ) $, if $ \mathrm{ T }( \mathscr{ G }_c )   = \mathscr{A}( \mathscr{ G }_c  )[k] $. 
	
\end{theorem}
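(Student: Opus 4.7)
The plan is to mimic the proof of Theorem~\ref{thmTranstemporaledges}, replacing the time aspect by an arbitrary aspect $k$ with $2 \leq k \leq p$. As remarked in the text preceding the statement, Corollary~\ref{corC-randomTVGtopologicalproperties} extends to $\mathbf{O}(\log_2(|\mathbb{V}(\mathscr{G}_c)|))$-C-random simple MAGs of order $p \geq 2$, so I would invoke that extension first to obtain composite diameter $2$ and $\tfrac{|\mathbb{V}(\mathscr{G}_c)|}{4}+\mathbf{o}(|\mathbb{V}(\mathscr{G}_c)|)$ disjoint length-$2$ paths between any pair of composite vertices. The displayed hypothesis rearranges, via the tautology $|\mathbb{V}(\mathscr{G}_c)| = |\mathrm{V}(\mathscr{G}_c)| \cdot \prod_{h \geq 2} |\mathscr{A}(\mathscr{G}_c)[h]|$, to the clean statement $|\mathscr{A}(\mathscr{G}_c)[k]| > 8$, which is the exact analogue of the $|\mathrm{T}(\mathrm{G_t})| > 8$ condition used before.

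Next I would split into two cases, as in Theorem~\ref{thmTranstemporaledges}. If the composite edge from $(u,\ldots,x_{ki},\ldots)$ to $(v,\ldots,x_{kj},\ldots)$ already belongs to $\mathscr{E}(\mathscr{G}_c)$, it is itself crosslayer (or transtemporal, when $k = 2$), since $j > i+2$ forces $j \neq i$ and $j \neq i \pm 1$ on the $k$-th coordinate. Otherwise, I would work with the $\tfrac{|\mathbb{V}(\mathscr{G}_c)|}{4}+\mathbf{o}(|\mathbb{V}(\mathscr{G}_c)|)$ internally disjoint length-$2$ paths $\mathbf{u} \to \mathbf{h} \to \mathbf{v}$ and bound from above the number of intermediates $\mathbf{h}$ whose $k$-coordinate equals $x_{ki}$ or $x_{kj}$. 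Pinning the $k$-th entry to one of those two values and letting the remaining coordinates range freely yields at most $2|\mathbb{V}(\mathscr{G}_c)|/|\mathscr{A}(\mathscr{G}_c)[k]|$ such ``bad'' intermediates, which by the hypothesis is strictly less than $|\mathbb{V}(\mathscr{G}_c)|/4$, and hence asymptotically strictly below the total path count. Consequently, at least one path admits an intermediate $\mathbf{h}$ whose $k$-coordinate $x_{kz}$ avoids both $x_{ki}$ and $x_{kj}$.

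To close, I would again use $j > i+2$: the index sets $\{i-1, i, i+1\}$ and $\{j-1, j, j+1\}$ are disjoint, so $z$ cannot lie in both; therefore at least one of the two edges $(\mathbf{u}, \mathbf{h})$, $(\mathbf{h}, \mathbf{v})$ has $k$-coordinates whose index-difference is strictly greater than $1$, which is exactly the crosslayer (or transtemporal) condition on that edge.

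The main obstacle I expect is bookkeeping rather than mathematical depth. First, one must carefully justify that Corollary~\ref{corC-randomTVGtopologicalproperties} genuinely extends from TVGs to order-$p$ MAGs; the excerpt merely asserts this, relying on the underlying MAG results of \cite{Abrahao2018d}, so one should at minimum verify that the statements on diameter and disjoint length-$2$ paths apply uniformly in $p$. Second, one needs to make the ``large enough $\mathscr{G}_c$'' threshold explicit so that the strict inequality $2|\mathbb{V}(\mathscr{G}_c)|/|\mathscr{A}(\mathscr{G}_c)[k]| < |\mathbb{V}(\mathscr{G}_c)|/4$ dominates the $\mathbf{o}(|\mathbb{V}(\mathscr{G}_c)|)$ term in the disjoint-path count, so that the pigeonhole step is actually available.
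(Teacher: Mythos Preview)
Your proposal is correct and follows precisely the route the paper indicates: the paper does not give a separate proof of Theorem~\ref{lemmaTransaspectedges} but merely states that it is obtained by extending Corollary~\ref{corC-randomTVGtopologicalproperties} to simple MAGs of order $p\geq 2$ and then rerunning the argument of Theorem~\ref{thmTranstemporaledges} with the time aspect replaced by an arbitrary aspect $k$, which is exactly your plan. Your unpacking of the displayed hypothesis as $|\mathscr{A}(\mathscr{G}_c)[k]|>8$, the pigeonhole count $2|\mathbb{V}(\mathscr{G}_c)|/|\mathscr{A}(\mathscr{G}_c)[k]|<|\mathbb{V}(\mathscr{G}_c)|/4$, and the disjointness of $\{i-1,i,i+1\}$ and $\{j-1,j,j+1\}$ when $j>i+2$ match the paper's Theorem~\ref{thmTranstemporaledges} argument step for step; the two caveats you flag (the appeal to \cite{Abrahao2018d} for the extended corollary and the absorption of the $\mathbf{o}(|\mathbb{V}(\mathscr{G}_c)|)$ term) are handled in the paper only by the standing ``large enough'' assumption, so you are not missing anything the paper supplies.
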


\section{Conclusions}\label{sectionConclusions}

In this work, we have studied some topological properties of plain algorithmically random high-order networks that can be formally represented by MultiAspect Graphs (MAGs), 
in particular, dynamic networks, multilayer networks, and dynamic multilayer networks.
We have shown that these networks have very short diameter, high k-connectivity, degrees of the order of half of the network size within a strong-asymptotically dominated standard deviation, and rigidity under permutations of composite vertices. 
Therefore, these theoretical findings directly relate lossless compressibility of high-order networks with their network topological properties. 

Then, we have demonstrated the presence of transtemporal or crosslayer edges (i.e., edges linking vertices at non-adjacent time instants or layers) in incompressible dynamic, multilayer, or dynamic multilayer networks. 
Note that a snapshot-like representation of dynamic networks is not accurate enough to capture the presence of such edges in the underlying structures of some real-world networks.
Thus, with the purpose of bringing algorithmic randomness to the context of high-order networks or complex networks, our theoretical results suggest that estimating or analyzing both the incompressibility and the network topological properties of real-world networks cannot be taken into a universal approach, such as the incompressibility of arbitrary MAGs. 

\section*{Acknowledgements}

The authors are grateful for the support of CAPES, CNPq, FAPERJ, and FAPESP.

\bibliographystyle{plain}
\bibliography{ETC2019-2.1-CompleteRefs-Felipe.bib}

\end{document}